\newtheorem{definition}{Definition}
\newtheorem{lemma}{Lemma}
\newtheorem{theorem}{Theorem}
\newtheorem{proposition}{Proposition}
\newtheorem{assumption}{Assumption}
\newtheorem{problem}{Problem}
\begin{document}
	
	\title{\bf Safety-Critical Online Control with Adversarial Disturbances}
	
	\author{Bhaskar Ramasubramanian$^{1}$, Baicen Xiao$^{1}$, Linda Bushnell$^{1}$, and Radha Poovendran$^{1}$%
		\thanks{$^{1}$Network Security Lab, Department of Electrical and Computer Engineering, 
			University of Washington, Seattle, WA 98195, USA. \newline
			{\tt\small \{bhaskarr, bcxiao, lb2, rp3\}@uw.edu}}
		\thanks{This work was supported by the U.S. Army Research Office and the Office of Naval Research via Grants W911NF-16-1-0485 and N00014-17-S-B001 respectively. }
	}
	
\maketitle
	
\begin{abstract}
This paper studies the control of safety-critical dynamical systems in the presence of adversarial disturbances. 
We seek to synthesize state-feedback controllers to minimize a cost incurred due to the disturbance while respecting a safety constraint. 
The safety constraint is given by a bound on an $\mathcal{H}_\infty$ norm, while the cost is specified as an upper bound on an $\mathcal{H}_2$ norm of the system. 
We consider an online setting where costs at each time are revealed only after the controller at that time is chosen. 
We propose an iterative approach to the synthesis of the controller by solving a modified discrete-time Riccati equation. 
Solutions of this equation enforce the safety constraint. 
We compare the cost of this controller with that of the optimal controller when one has complete knowledge of disturbances and costs in hindsight. 
We show that the \emph{regret} function, which is defined as the difference between these costs, varies logarithmically with the time horizon. 
We validate our approach on a process control setup that is subject to two kinds of adversarial attacks.
\end{abstract}
	
\section{Introduction}

The recent advances and successes of reinforcement learning (RL) \cite{sutton2018reinforcement} in 
robotics, games, and mobile networks \cite{hafner2011reinforcement, mnih2015human, lillicrap2016continuous, silver2016mastering, zhang2019deep} has spurred its use in other areas where RL algorithms interact with the physical environment over long periods of time  \cite{sadigh2016planning, yan2018data, you2019advanced}. 
An increasingly popular domain where RL methods are being deployed are to  
safety-critical systems like large-scale power systems \cite{lewis2012reinforcement},  
which are susceptible to attacks by an intelligent adversary \cite{banerjee2012ensuring, sullivan2017cyber}. 
Since these systems have an underlying dynamic model, actions of the system are typically a function of (a history of) the system states. 
Rules governing these actions can be designed so that the overall system behaves in a desired way. 
At the same time, the actions may have to be chosen to minimize a cost. 

We consider a safety-critical linear time invariant (LTI) system affected by adversarial inputs. 
Our goal is to design state-feedback controllers to minimize the cost incurred due to this input while satisfying a safety constraint. 
This is also called the `\emph{combined $\mathcal{H}_2/ \mathcal{H}_\infty$ problem}' \cite{haddad1991mixed, mustafa1991lqg, kaminer1993mixed}. 
The $\mathcal{H}_\infty$ safety constraint enforces a bound on the ratio of the magnitude of the output to that of the adversarial disturbance. 
The $\mathcal{H}_2$ cost is the expected mean square output value when the disturbance input is a white noise process. 
The $\mathcal{H}_\infty$ constraint is embedded in the optimization process by solving a modified discrete-time Riccati equation, whose solution yields an upper bound on the $\mathcal{H}_2$ cost.  

In this paper, we study an online scenario of the combined $\mathcal{H}_2/ \mathcal{H}_\infty$ problem. 
At each time, the adversary inserts a disturbance, after which the cost incurred is revealed to the system. 
Our aim is to iteratively design controllers to minimize this cost, while satisfying the safety constraint. 
We compare the cost of this controller with that of the optimal controller if all adversarial inputs and costs were known apriori. 
The difference between these costs is termed as the \emph{regret} faced by the system. 

Regret bounds for partially and fully adversarial disturbances for LTI dynamics and convex costs were presented in \cite{agarwal2019logarithmic, agarwal2019online, hazan2019nonstochastic, simchowitz2020improper, foster2020logarithmic},  
where the authors considered a richer class of `disturbance action policies'. 
These policies depend not only on the current state, but also on a history of disturbances, and provide stronger regret bounds (poly-logarithmic) than we present (logarithmic), since policies in this paper only depend on the current state. 
Also, while the analyses in \cite{agarwal2019logarithmic, agarwal2019online, hazan2019nonstochastic, simchowitz2020improper, foster2020logarithmic} fix a stabilizing controller at the start of their algorithms, we adopt a different approach and iteratively solve a set of Riccati equations to update the controller at each time step.
\subsection{Contributions}

We aim to minimize an upper bound on the $\mathcal{H}_2$ cost for an LTI system with adversarial inputs with an $\mathcal{H}_\infty$ constraint on the disturbance-output map in an online setting. 
At each time, the adversary inserts a disturbance input. 
The cost functions are revealed to the system only 
after it has determined a controller. 
We make the following contributions:
\begin{itemize}
\item We introduce strongly stable disturbance attenuating (S2DA) policies. This generalizes strongly stable policies from \cite{cohen2018online}. S2DA policies are strongly stable and satisfy an additional condition on the $\mathcal{H}_\infty$ norm. 
\item We show that initializing our procedure with a stabilizing disturbance attenuating policy will yield stabilizing disturbance attenuating policies at successive time steps. 
If solutions of the Riccati recursion are bounded, we show that these policies will also be strongly stable.
\item We establish bounds on the difference between solutions to the Riccati equation at successive time steps. 
We use the above results to show that the regret bound is $O(\log T)$, where $T$ is the time horizon of interest.
\item We validate our method on a model of the Tennessee Eastman control challenge \cite{downs1993plant} that is subject to arbitrary adversarial inputs, and a denial-of-service attack. 
\end{itemize}
\subsection{Outline of Paper}

The rest of this paper is organized as follows: 
Section \ref{RelatedWork} summarizes related work. 
We state our problem and detail our solution in Sections \ref{Problem} and \ref{Solution}. 
Section \ref{Simulation} illustrates our approach on a model of the Tennessee Eastman control challenge. Section \ref{Conclusion} concludes the paper. 

\section{Related Work}\label{RelatedWork}

Simultaneous $\mathcal{H}_2/ \mathcal{H}_\infty$ policy synthesis for discrete-time linear systems is a well-studied problem. 
The structure of an upper bound for the LQG cost, minimizing which would solve the mixed problem was first proposed in \cite{haddad1991mixed}. 
The authors of this paper also presented a closed-form controller that would minimize this upper bound. 
Two other upper bounds for the LQG cost were proposed in \cite{mustafa1991lqg}, and it was shown that the same controller would be optimal in each case when restricted to static, full-state feedback. 
In this case, a static, time-invariant state-feedback is sufficient for optimal performance \cite{kaminer1993mixed}. 
In discrete-time, this does not hold for the full-information feedback (states and disturbances available), or partial information cases. 
This problem was studied for nonlinear discrete-time systems in \cite{aliyu2008discrete}. 
An orthogonal approach to solve disturbance attenuation and rejection problems using geometric control theory was presented in \cite{wonham1974linear, willems1981almost, basile1992controlled, furuta1993closed, saberi1996h2, lin2000solutions}. 
However, these works considered the $\mathcal{H}_2$ and $\mathcal{H}_{\infty}$ cases separately. 
These problems have also been studied in robust and model predictive control \cite{zhou1996robust, bemporad1999robust, rakovic2018handbook}.

There has been a renewed interest in the use of RL techniques in learning to control linear dynamical systems. 
Recent developments in this field are surveyed in \cite{recht2019tour}. 
An online version of LQ control with Gaussian disturbances was presented in \cite{cohen2018online}, where the authors presented a regret bound for known LTI dynamics and adversarial quadratic costs. 
In \cite{agarwal2019logarithmic}, the authors considered strongly convex costs, and presented stronger regret bounds for a larger class of policies they termed disturbance action policies. 
This was generalized to semi-adversarial disturbance inputs and convex costs in \cite{agarwal2019online}. 
The authors of \cite{akbari2019iterative} adopted a different approach to determine regret bounds for the LQR. 
They iteratively solved a sequence of Riccati equations to generate a sequence of stabilizing controllers, and compared the cost of this sequence of controllers with that of the optimal static controller if all costs were known in hindsight. 

Minimizing the regret over sequentially revealed adversarial convex costs against the class of linear policies when a model of the dynamics was unknown was studied in \cite{hazan2019nonstochastic}. 
This was generalized to partially observed systems with semi-adversarial disturbances in \cite{simchowitz2020improper}, and for the Kalman filter in \cite{tsiamis2020online}. 
More recently, \cite{foster2020logarithmic} presented regret bounds for the case of fully adversarial disturbances. 

Sample complexity bounds for the LQR for an LTI system with unknown dynamics were given in \cite{dean2019sample}, and for the Kalman filter in \cite{tsiamis2019sample}. 
The convergence of policy-gradient methods for the LQR was studied in \cite{fazel2018global, tu2019gap, gravell2019learning}. 
Convergence guarantees for policy gradient methods for the mixed $\mathcal{H}_2/ \mathcal{H}_\infty$ problem was reported in \cite{zhang2019policy}. 
The authors of \cite{dean2019safely} studied a trade-off between exploration for learning and safety for the LQR under bounded disturbances and constraints on the state and input sets. 

\section{Preliminaries and Problem Formulation}\label{Problem}

For a matrix $M$, we write $M \geq 0$ when $M$ is positive semi-definite. 
We write $M_1 \geq M_2$ when $M_1 - M_2 \geq 0$. 
$Tr(M)$ and $\lambda_{max} (M)$ denote the trace and maximum eigenvalue of $M$. 
Consider a discrete-time linear system:
\begin{align}
x_{t+1}&=Ax_t+Bu_t+Dw_t \label{StateEqn}\\
z_t&=C x_t + E u_t \label{OutputEqn}, 
\end{align}
where $x_t , u_t, w_t, z_t$ denote the state, control, disturbance, and controlled output. 
In this case, the optimal stabilizing control will be a static state feedback $u_t = -Kx_t$ \cite{kaminer1993mixed}. 

Let $T_{zw}$ denote the input-output map from the disturbance to the measured output. 
We want the $\mathcal{H}_\infty$ norm of the closed-loop system $||T_{zw}||_{\infty}:=\sup \{\frac{||z||_2}{||w||_2}: 0 < ||w||_2 < \infty\}$ to remain below a desired threshold, $\gamma$. 
When $||T_{zw}||_{\infty} < \gamma$, the controller is deemed to have \emph{attenuated} the disturbance\footnote{We will say that the disturbance has been attenuated if $||T_{zw}||_{\infty} < \gamma$ is true even when $\gamma > 1$. In the time-invariant case, $||T_{zw}||_{\infty}$ can be computed as the maximum singular value of the transfer matrix from $w$ to $z$, restricted to the boundary of the unit-circle .}. 

The $\mathcal{H}_2$ norm of a linear system is the expected root mean square value of $z_t$ when $w_t$ is a white noise process \cite{green2012linear}. 
In this case, the $\mathcal{H}_2$-cost will be given by $\lim \limits_{t \rightarrow \infty} \mathbb{E}[z^T_tz_t]$. 
Since $w_t$ in this paper can be a more general adversarial input, we will choose to minimize an upper bound on the (squared) $\mathcal{H}_2$ norm of the closed-loop system \cite{mustafa1991lqg}.
%
\begin{assumption}\label{Assumption1}Assume the following:
\begin{enumerate}
\item $(A, B)$ is stabilizable. This will ensure the existence of a controller $K$ such that $(A-BK)$ is stable. 
\item $C^TC = Q \geq 0$, $E^T C = 0$ and $E^T E = R \geq 0$. This will ensure elimination of cross-weighting terms between state and control variables \cite{mustafa1991lqg}. 
\end{enumerate}
\end{assumption}

Since we are interested in stabilizing controllers that additionally attenuate the adversarial input, we define the \emph{valid} set of controllers as:
\begin{align}
\mathcal{K}:=\{K: |\lambda_{max}(A-BK)| < 1, ||T^K_{zw}||_{\infty} < \gamma\}, \label{ValidController}
\end{align}
where $T^K_{zw}$ is the input-output map from $w$ to $z$ under the controller $K$. 
An upper bound on the infinite-horizon $\mathcal{H}_2$-cost that we seek to minimize in this paper is \cite{haddad1991mixed, mustafa1991lqg}:
\begin{align}
J(K) &:= Tr(PDD^T) \quad \text{ where $P$ solves}\label{Cost}\\
(A-BK)^T &\tilde{P} (A-BK) +Q + K^TRK - P = 0 \label{RiccatiEqn} \\
\tilde{P}&:=P + PD(\gamma^2 I - D^T PD)^{-1} D^T P
\end{align}

A typical objective to achieve mixed $\mathcal{H}_2$/ $\mathcal{H}_\infty$ goals can then be stated as: 
`\emph{For the system in Equations (\ref{StateEqn}) - (\ref{OutputEqn}), determine a sequence of controls $\{u_t\}_{t > 0}$ so that:}
\begin{enumerate}
\item \emph{the cost in Equation (\ref{Cost}) is minimized, subject to}
\item \emph{$u_t = -K x_t$, $K \in \mathcal{K}$, where $\mathcal{K}$ is as in Equation (\ref{ValidController}). }'
\end{enumerate}

If $P \geq 0$ is a solution to Eqn. (\ref{RiccatiEqn}), then $(A-BK)$ is stable if and only if $(A, (Q + K^T R K)^{1/2})$ is detectable \cite{haddad1991mixed}. 
If $P^* \geq 0$ is a solution and $P^* \leq P$ for all other solutions $P$, then $P^*$ is a \emph{minimal} solution. 
The controller that minimizes the cost while achieving $||T^{K^*}_{zw}||_{\infty} < \gamma$ is \cite{haddad1991mixed, zhang2019policy}:
\begin{align}\label{OptCont}
K^*&=(R + B^T \tilde{P}^*B)^{-1} B^T \tilde{P}^*A
\end{align}

We focus on an online setting of Equations (\ref{StateEqn}) - (\ref{OutputEqn}). 
At each time $t$, the adversary chooses $w_t$. 
The learner chooses $u_t = -K_tx_t$, and suffers a loss determined as a function of the matrices $C_t$ and $E_t$. 
We assume that the sequence of matrices $\{C_t, E_t\}$ is determined before the start of the learning process. 
However, they are revealed to the learner only after it chooses $u_t$. 
Therefore, the learner faces a \emph{regret}, defined as the difference between the cost when using the aforementioned controller and the optimal controller from the set $\mathcal{K}$. 
We aim to minimize this regret, and ensure that it grows sub-linearly with the time horizon $T$. 
Formally, 
\begin{problem}\label{ProblemTypical}
At each time $t$, the learner observes state $x_t$, and commits to a controller $u_t = K_t x_t$. 
After this, cost matrices $Q_t:=C_t^T C_t, R_t:= E^T_t E_t$ such that $C_t^T E_t = 0$ are revealed to the learner. 
This cost incurred to the learner is upper-bounded by $J_t (K_t)= Tr (P_t D D^T)$, $P_t$ being the solution of Equation (\ref{RiccatiEqn}). 
With $J_0 (\cdot):=0$, determine a sequence of policies $\{K_t\}$ such that for some large enough time $T$, a regret term, defined as $R(T):= J_T (K_T) -\min_{K \in \mathcal{K}} J_T(K)$ 
grows sub-linearly with $T$. 
\end{problem}
%
%

\section{Solution Method}\label{Solution}

We briefly summarize our solution approach. 
First, we introduce strongly stable disturbance attenuating policies. 
This is motivated by strongly stable policies introduced in \cite{cohen2018online} to quantify the stability of a stabilizing policy. 
Then, we show that if we initialize our procedure with a stable disturbance attenuating policy, successive iterates will continue to yield policies that are stable and disturbance attenuating. 
When solutions of the Riccati recursion are uniformly bounded, we show that the sequence of stabilizing and disturbance attenuating policies are also strongly stable for an appropriate choice of parameters. 
In order to establish our regret bounds, we determine upper bounds on the difference between solutions to the Riccati recursion at successive time steps. 
We put these together to get the final regret bound. 
The regret bound comprises a \emph{burn-in cost}, a cost that is incurred before we start to obtain meaningful bounds, while a second term gives the bound for a large enough time horizon $T$. 
%
\subsection{Strong Stability}

We leverage the notion of a strongly stable controller first proposed in \cite{cohen2018online} for the LQR problem. This was subsequently used in \cite{agarwal2019online} for the more general case. 
\begin{definition}[Strongly Stable Policies \cite{cohen2018online}]
A policy $K$ is stable if $|\lambda_{max}(A-BK)| < 1$. 
It is $(\kappa, \epsilon)-$strongly stable for $\kappa > 0$, $\epsilon \in (0,1]$ if $||K|| \leq \kappa$, and there exist matrices $L, H$ such that $A-BK = HLH^{-1}$, with $||L|| \leq 1-\epsilon$ and $||H||||H^{-1}|| \leq \kappa$. 
\end{definition}

Sequentially strongly stable controllers were used in \cite{cohen2018online, akbari2019iterative} to reason about a sequence of strongly stable policies. 
\begin{definition}[Sequentially Strongly Stable Policies \cite{cohen2018online}]
A sequence of policies $\{K_t\}_{t \geq 1}$ is sequentially $(\kappa, \epsilon)-$strongly stable for $\kappa > 0$, $\epsilon \in (0,1]$ if there exist sequences of matrices $\{L_t\}_{t \geq 1}, \{H_t\}_{t \geq 1}$ such that for all $t \geq 1$,  $A-BK_t = H_t L_t H_t^{-1}$, and:
\begin{enumerate}
\item $||L_t|| \leq 1-\epsilon$, $||K_t|| \leq \kappa$, 
\item $||H_t|| \leq \beta$, $||H_t^{-1}|| \leq 1/\alpha$, where $\kappa = \beta / \alpha$, $\beta > 0$, 
\item $||H_{t+1}^{-1} H_t|| \leq 1+ \epsilon$.
\end{enumerate}
\end{definition}

In the above, observe that $|\lambda_{max}(A-BK_t)| = |\lambda_{max}(L_t)| \leq ||L_t||\leq 1-\epsilon$. 
Since we are interested in stable policies that will also achieve disturbance attenuation, we introduce the notion of strongly stable and sequentially strongly stable disturbance attenuating policies. 

\begin{definition}[Strongly Stable Disturbance Attenuating (S2DA) Policies]
A policy $K$ is $(\kappa, \epsilon, \gamma)-$S2DA if it is $(\kappa, \epsilon)-$strongly stable and $||T^{K}_{zw}||_{\infty} < \gamma$. 
\end{definition}

\begin{definition}[Sequentially Strongly Stable Disturbance Attenuating (S3DA) Policies]
A sequence of policies $\{K_t\}_{t \geq 1}$ is sequentially $(\kappa, \epsilon, \gamma)-$S3DA if $\{K_t\}_{t \geq 1}$ is sequentially $(\kappa, \epsilon)-$strongly stable and $||T^{K_t}_{zw}||_{\infty} < \gamma$ for all $t \geq 1$. 
\end{definition}

\subsection{Set $\mathcal{K}$ is Invariant}

In this part, we will show that if $K_1 \in \mathcal{K}$, then $K_t \in \mathcal{K}$ for all $t > 1$. 
That is, if we start with a stabilizing and disturbance attenuating controller, then successive updates of the controller will retain this property. 
The sequence of controllers is then said to be \emph{regularized} \cite{zhang2019policy}. 
We adapt the Riccati recursion update procedure in \cite{hewer1971iterative} to the setting of disturbance attenuation. 
We use representations of solutions to Lyapunov and Riccati equations to establish stability and disturbance attenuation for the updated controllers. 
Further, since matrices $C_t$ and $E_t$ are fixed at time $t$, we can use results specific to the time-invariant case. 
Before proving our result (Theorem \ref{TheoremSuccStabDist}), we state a useful result from robust control \cite{zhou1996robust} that transforms the constraints in Equation (\ref{ValidController}) to a the solution of a Riccati inequality. 

\begin{lemma}\cite{zhou1996robust}\label{BddRealLemma}
For a discrete-time linear time-invariant system, the following conditions are equivalent:
\begin{enumerate}
\item The controller gain $K \in \mathcal{K}$.
\item There exists $P > 0$ such that: \emph{i)}: $I - \gamma^{-2} D^TPD > 0$, and \emph{ii)}: $Q + K^T R K-P+(A-BK)^T (P + P D(\gamma^2 I - D^T PD)^{-1} D^T P) (A-BK)< 0$.
\item The Riccati equation (\ref{RiccatiEqn}) admits a unique stabilizing solution $P \geq 0$ such that: \emph{i)}: $I - \gamma^{-2} D^TPD > 0$, and \emph{ii)}: $(I - \gamma^{-2} D^TPD)^{-T} (A-BK)$ is stable.
\end{enumerate}
\end{lemma}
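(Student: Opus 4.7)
The plan is to recognize this as the discrete-time Bounded Real Lemma applied to the closed-loop system $x_{t+1} = (A-BK)x_t + Dw_t$, $z_t = (C-EK)x_t$, and to organize the three equivalences as $(1)\Leftrightarrow(2)$ via a dissipation/storage-function argument and $(2)\Leftrightarrow(3)$ via a monotone Riccati recursion. Under Assumption~\ref{Assumption1}, one has $z_t^T z_t = x_t^T(Q + K^T R K)x_t$, so everything is phrased in terms of $(A-BK, D, Q+K^TRK)$ alone.

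For $(1)\Leftrightarrow(2)$, I would look for a quadratic storage function $V(x) = x^T P x$ with $P > 0$ satisfying the dissipation inequality $V(x_{t+1}) - V(x_t) + z_t^T z_t - \gamma^2 w_t^T w_t < 0$ for all nonzero $(x_t, w_t)$. Telescoping from $x_0 = 0$ along any square-summable nonzero $w$ and using stability of $A-BK$ yields $\|z\|_2^2 < \gamma^2 \|w\|_2^2$, which is exactly the $\mathcal{H}_\infty$ bound. Expanding the dissipation into a quadratic form in $(x_t, w_t)$ produces a $2 \times 2$ block matrix inequality whose $(2,2)$-block is $D^T P D - \gamma^2 I < 0$, and a Schur complement in that block collapses the inequality to the second half of condition (2), while feasibility of the Schur complement itself is the first half. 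The converse direction, constructing $P$ from the $\mathcal{H}_\infty$ hypothesis, is the classical KYP/Bounded Real step, assembled from a stable spectral factor of the positive-on-the-unit-circle transfer matrix $\gamma^2 I - (T^K_{zw})^\sim T^K_{zw}$.

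For $(2)\Leftrightarrow(3)$, I would set up the monotone Riccati iteration $P_{k+1} = \mathcal{R}(P_k)$ obtained by rewriting the right-hand side of~(\ref{RiccatiEqn}) as a fixed-point map on the cone $\{P \geq 0 : I - \gamma^{-2} D^T P D > 0\}$. The strict LMI at $P_0$ forces $P_1 \leq P_0$, and monotonicity of $\mathcal{R}$ on this cone gives a decreasing, bounded-below sequence converging to the minimal nonnegative fixed point $P^\star$, which satisfies~(\ref{RiccatiEqn}) with equality; Schur-stability of the transformed closed-loop $(I - \gamma^{-2}D^TP^\star D)^{-T}(A-BK)$ follows from the detectability side-condition noted after~(\ref{OptCont}). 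For $(3)\Rightarrow(1)$, the reverse Schur complement manipulation on the Riccati equality yields a dissipation identity $V(x_{t+1}) - V(x_t) + z_t^T z_t - \gamma^2 w_t^T w_t + S_t = 0$ with $S_t \geq 0$ a completion-of-squares remainder in $w_t$ relative to its worst-case value, which on telescoping gives $\|z\|_2 \leq \gamma \|w\|_2$.

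The main obstacle will be upgrading this last inequality to the strict $\mathcal{H}_\infty$ bound required by (1): I would argue that equality for some nonzero $w$ forces $S_t \equiv 0$ along that trajectory, which identifies $w$ with the worst-case disturbance feedback and contradicts Schur-stability of the transformed closed-loop (a continuity-in-$\gamma$ perturbation is an acceptable alternative). The other somewhat delicate point is the spectral-factorization construction of $P$ in $(1)\Rightarrow(2)$, which is essentially the content of the KYP lemma and which I would invoke as a known result rather than rederive from scratch.
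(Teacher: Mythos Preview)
Your proposal is a reasonable sketch of the standard proof of the discrete-time Bounded Real Lemma, but there is nothing to compare it to: the paper does not prove Lemma~\ref{BddRealLemma}. It is stated with the citation \cite{zhou1996robust} and used as a black box in the proof of Theorem~\ref{TheoremSuccStabDist}; the paper offers no argument for it whatsoever.

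For what it is worth, your outline---the dissipation inequality with quadratic storage and Schur complement for $(1)\Leftrightarrow(2)$, and the monotone Riccati iteration for $(2)\Leftrightarrow(3)$---is essentially the textbook route taken in the cited reference, so you are reconstructing the very proof the paper defers to. The two points you flag as delicate (upgrading the non-strict $\mathcal{H}_\infty$ bound to a strict one in $(3)\Rightarrow(1)$, and the KYP/spectral-factorization step in $(1)\Rightarrow(2)$) are exactly where the work lies; your handling of both is at the level of a sketch but points in the right direction. One small remark: in $(2)\Rightarrow(3)$ your monotone iteration argument presumes the strict LMI solution $P_0$ already lies in the domain of $\mathcal{R}$ and dominates the first iterate, which is fine, but you should also note that the limit inherits the condition $I-\gamma^{-2}D^TP^\star D>0$ from the monotone decrease, since that is part of what condition~(3) asserts.
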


In the sequel, for $t \geq 1$, define:
\begin{align}
\bar{R}_t&:=\frac{t-1}{t} \bar{R}_{t-1} + \frac{1}{t}R_t \label{RtUpdate}\\
\bar{Q}_t&:= \frac{t-1}{t} \bar{Q}_{t-1} + \frac{1}{t}Q_t \label{QtUpdate}
\end{align}
We perform the update in this manner in order to obtain useful bounds on differences between successive updates as a function of the time index $t$. 

\begin{theorem}\label{TheoremSuccStabDist}
Let Assumption \ref{Assumption1} hold, $K_1 \in \mathcal{K}$, and there is a solution $P_1 \geq 0$ to Equation (\ref{RiccatiEqn}). 
Suppose at time $t$,
\begin{align}
(A-BK_t)^T \tilde{P}_t &(A-BK_t) +\bar{Q}_t + K_t^T \bar{R}_tK_t = P_t , \nonumber \\
\tilde{P}_t:=P_t + &P_tD(\gamma^2 I - D^T P_tD)^{-1} D^T P_t \label{RiccatiRecn}
\end{align}
and $K_t$ is updated as:
\begin{align}\label{ContrRecn}
K_{t+1}&=(\bar{R}_t + B^T \tilde{P}_t B)^{-1} B^T \tilde{P}_t A. 
\end{align}
Then $K_t \in \mathcal{K}$ for all $t > 1$.
\end{theorem}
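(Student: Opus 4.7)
The plan is to establish the claim by induction on $t$, in the spirit of Hewer's iterative Riccati argument \cite{hewer1971iterative}, with the $\mathcal{H}_\infty$ correction absorbed into $\tilde{P}_t$. The base case $t=1$ is given by hypothesis. For the inductive step, suppose $K_t\in\mathcal{K}$ and $P_t\ge 0$ solves (\ref{RiccatiRecn}); by the equivalence $(1)\Leftrightarrow(3)$ of Lemma \ref{BddRealLemma}, we also have $I-\gamma^{-2}D^TP_tD>0$. I would then show $K_{t+1}\in\mathcal{K}$ by verifying condition $(2)$ of Lemma \ref{BddRealLemma} with the certificate $P=P_t$ itself.

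The key algebraic step is a completion-of-squares identity valid for every gain $K$,
\begin{align*}
(A-BK)^T\tilde{P}_t(A-BK) + K^T\bar{R}_tK &= A^T\tilde{P}_tA - A^T\tilde{P}_tB(\bar{R}_t + B^T\tilde{P}_tB)^{-1}B^T\tilde{P}_tA \\
&\quad + (K-K_{t+1})^T(\bar{R}_t + B^T\tilde{P}_tB)(K-K_{t+1}),
\end{align*}
in which the form (\ref{ContrRecn}) makes $K_{t+1}$ exactly the minimizer in $K$. Evaluating at $K=K_t$ and substituting from (\ref{RiccatiRecn}) pins down the ``unregularized'' Riccati residual $A^T\tilde{P}_tA - A^T\tilde{P}_tB(\bar{R}_t + B^T\tilde{P}_tB)^{-1}B^T\tilde{P}_tA$ in terms of $P_t,\bar{Q}_t,$ and the gain gap $K_t-K_{t+1}$, while evaluating at $K=K_{t+1}$ isolates the same residual as $(A-BK_{t+1})^T\tilde{P}_t(A-BK_{t+1}) + K_{t+1}^T\bar{R}_tK_{t+1}$. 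Subtracting the two expressions then yields
\begin{align*}
&(A-BK_{t+1})^T\tilde{P}_t(A-BK_{t+1}) + \bar{Q}_t + K_{t+1}^T\bar{R}_tK_{t+1} \\
&\qquad = P_t - (K_t-K_{t+1})^T(\bar{R}_t + B^T\tilde{P}_tB)(K_t-K_{t+1}),
\end{align*}
which is $\le P_t$, with strict inequality whenever $K_{t+1}\neq K_t$ (since $\bar{R}_t + B^T\tilde{P}_tB > 0$). The degenerate case $K_{t+1}=K_t$ is immediate from the inductive hypothesis. Dropping the non-negative terms $\bar{Q}_t + K_{t+1}^T\bar{R}_tK_{t+1}$ on the left-hand side then produces precisely the strict Riccati inequality of condition $(2)$ of Lemma \ref{BddRealLemma} with $P=P_t$; combined with $I-\gamma^{-2}D^TP_tD>0$ from the inductive hypothesis, this certifies $K_{t+1}\in\mathcal{K}$.

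The main obstacle I anticipate is propagating the inductive hypothesis in full, not merely establishing $K_{t+1}\in\mathcal{K}$: to advance to the next step one needs a PSD solution $P_{t+1}$ of (\ref{RiccatiRecn}) at time $t+1$ with the updated averaged weights $\bar{Q}_{t+1},\bar{R}_{t+1}$ and the side condition $I-\gamma^{-2}D^TP_{t+1}D>0$. This should follow from the direction $(1)\Rightarrow(3)$ of Lemma \ref{BddRealLemma} applied to the freshly established $K_{t+1}\in\mathcal{K}$, but it hinges on preserving detectability of $(A,(\bar{Q}_{t+1}+K_{t+1}^T\bar{R}_{t+1}K_{t+1})^{1/2})$ under the convex-combination updates (\ref{RtUpdate})--(\ref{QtUpdate}) --- the one place where the PSD (rather than PD) nature of the weight matrices needs careful handling, and where Assumption \ref{Assumption1} is invoked.
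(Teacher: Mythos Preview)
Your approach is essentially correct and is actually more direct than the paper's. Both routes hinge on the same completion-of-squares identity
\[
(A-BK_{t+1})^T\tilde{P}_t(A-BK_{t+1}) + \bar{Q}_t + K_{t+1}^T\bar{R}_tK_{t+1} - P_t \;=\; -(K_t-K_{t+1})^T(\bar{R}_t + B^T\tilde{P}_tB)(K_t-K_{t+1}),
\]
but you then feed this directly into condition~(2) of Lemma~\ref{BddRealLemma} with the single certificate $P=P_t$, whereas the paper proceeds in two separate stages. First it rewrites the identity as a Lyapunov equation $P_t = (A-BK_{t+1})^TP_t(A-BK_{t+1}) + M$ with $M>0$ and, after a preliminary argument that $P_t$ is bounded (comparing it to the solution of the ordinary Lyapunov equation for $K_t$), concludes from the series representation that $A-BK_{t+1}$ is stable. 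Second, having stability in hand, it \emph{constructs a new} matrix $P\le P_t$ by solving a perturbed Lyapunov equation with an auxiliary $\rho I$ slack term, and verifies condition~(2) of Lemma~\ref{BddRealLemma} for that $P$ rather than for $P_t$. Your one-shot use of $P_t$ subsumes both stages, since the equivalence $(2)\Leftrightarrow(1)$ in the lemma already delivers stability along with the $\mathcal{H}_\infty$ bound. What the paper's longer route buys is that the boundedness of $P_t$ is isolated as a standalone fact, which is reused later in the regret analysis (Theorem~\ref{ThmRiccDiff}).

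One correction: your sentence about ``dropping the non-negative terms $\bar{Q}_t + K_{t+1}^T\bar{R}_tK_{t+1}$'' is misplaced. The displayed identity \emph{is} already condition~(2)(ii) of Lemma~\ref{BddRealLemma} verbatim with $Q=\bar{Q}_t$, $R=\bar{R}_t$, $K=K_{t+1}$, $P=P_t$; the strictness is supplied by the negative-definite right-hand side whenever $K_{t+1}\neq K_t$, not by removing anything from the left. Discarding $\bar{Q}_t + K_{t+1}^T\bar{R}_tK_{t+1}$ would yield a different inequality that is not what the lemma requires. With that sentence deleted (and the $K_{t+1}=K_t$ case handled by the inductive hypothesis, as you already note), your argument goes through, modulo the same positive-definiteness caveat on the certificate that the paper itself glosses over.
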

\begin{proof}
We will begin by showing that $K_t \in \mathcal{K}$ will ensure that the solution to the Equation (\ref{RiccatiRecn}) is bounded. 
To do this, we use the fact that for a stabilizing $K_t$, the solution to an associated Lyapunov equation will be bounded. 
Then, we will show that $K_{t+1}$ will be stabilizing, and finally show that $K_{t+1}$ will also be disturbance attenuating. 
We use induction. 

\textbf{A.} \emph{\underline{Base Case}}: 

Since $(A, B)$ is stabilizable, there exists a stable controller $K_1$. Since there exists a solution $P_1$ to Equation (\ref{RiccatiEqn}), $K_1$ is also disturbance attenuating (from Lemma 2.1 of \cite{haddad1991mixed}), which establishes the base case of our induction.  

\textbf{B.} \emph{\underline{$P_t$ is Bounded}}: 

Let $K_t \in \mathcal{K}$ for some $t>1$. 
Since $K_t$ is stabilizing, there is a unique solution $\bar{P}_t \geq 0$ to the Lyapunov equation (\ref{LyapEqn}), with $\bar{P}_t$ given by \cite{rugh1996linear}: 
\begin{align}
&(A-BK_t)^T \bar{P}_t (A-BK_t) - \bar{P}_t = -(\bar{Q}_t + K_t^T \bar{R}_t K_t), \label{LyapEqn} \\
&\bar{P}_t=\sum_{i=0}^{\infty} ((A-BK_t)^T)^i (\bar{Q}_t + K_t^T \bar{R}_t K_t) (A-BK_t)^i. \nonumber
\end{align}
Now consider $P_t$ given by Equation (\ref{RiccatiRecn}). 
Subtracting Equation (\ref{LyapEqn}) from Equation (\ref{RiccatiRecn}), we get:
\begin{align}\label{DifferenceRic}
&P_t-\bar{P}_t = (A-BK_t)^T (P_t-\bar{P}_t )(A-BK_t)\\
+&(A-BK_t)^T(P_tD(\gamma^2 I - D^T P_tD)^{-1} D^T P_t)((A-BK_t) \nonumber
\end{align}
Since $K_t \in \mathcal{K}$, from Lemma \ref{BddRealLemma}, $(\gamma^2 I - D^T P_tD) > 0$. 
Therefore, the second term of Equation (\ref{DifferenceRic}) is positive definite, which means that (\ref{DifferenceRic}) is a Lyapunov equation for $P_t-\bar{P}_t$. 
The (unique) solution to this equation is given by: 
\begin{align}\label{DifferenceRicSoln}
&P_t-\bar{P}_t =\sum_{i=0}^{\infty} ((A-BK_t)^T)^i((A-BK_t)^T \times\\&(P_tD(\gamma^2 I - D^T P_tD)^{-1} D^T P_t)((A-BK_t))(A-BK_t)^i. \nonumber
\end{align}
Since $(A-BK_t)$ is stable, both $\bar{P}_t$ and $P_t-\bar{P}_t$ are bounded. 
Therefore, $P_t = \bar{P}_t + (P_t-\bar{P}_t)$ is bounded. 

\textbf{C.} \emph{\underline{$K_{t+1}$ is Stabilizing}}: 

Expanding $(A-BK_t)^T \tilde{P}_t (A-BK_t) + K_t^T \bar{R}_t K_t + \bar{Q}_t$ and using Equation (\ref{ContrRecn}) to write $B^T \tilde{P}_t A = (\bar{R}_t + B^T \tilde{P}_t B) K_{t+1}$:
\begin{align}
&A^T \tilde{P}_t A - K^T_t B^T \tilde{P}_t A - A^T \tilde{P}_t BK_t \nonumber\\&+ K^T_t(B^T \tilde{P}_tB + \bar{R}_t)K_t + \bar{Q}_t \nonumber\\
=&A^T \tilde{P}_t A + (K_{t+1} - K_t)^T(\bar{R}_t + B^T \tilde{P}_t B) (K_{t+1} - K_t) \nonumber\\&-K_{t+1}^TB^T \tilde{P}_t A -A^T \tilde{P}_t B K_{t+1} \nonumber\\&+ K_{t+1}^T (\bar{R}_t + B^T \tilde{P}_t B)K_{t+1}+ \bar{Q}_t \nonumber\\
=&(A-BK_{t+1})^T \tilde{P}_t (A-BK_{t+1}) + K_{t+1}^T \bar{R}_t K_{t+1} \nonumber\\&+ \bar{Q}_t + (K_{t+1} - K_t)^T(\bar{R}_t + B^T \tilde{P}_t B) (K_{t+1} - K_t) \nonumber\\
\Rightarrow P_t &= (A-BK_{t+1})^T P_t (A-BK_{t+1}) + M \label{PtTheorem}
\end{align}
In Equation (\ref{PtTheorem}), $M$ is a positive definite matrix defined as:
\begin{align*}
M&:=K_{t+1}^T \bar{R}_t K_{t+1} + \bar{Q}_t\\+&(A-BK_{t+1})^T (P_t D(\gamma^2 I - D^T P_t D)^{-1} D^TP_t)\\&\qquad \qquad \qquad \qquad \qquad \times(A-BK_{t+1})\\&+ (K_{t+1} - K_t)^T(\bar{R}_t + B^T \tilde{P}_t B) (K_{t+1} - K_t), 
\end{align*} 
The terms in the first and third lines in the above equation are positive definite by assumption, and $(\gamma^2 I - D^T P_t D) > 0$ from Lemma \ref{BddRealLemma}. 
Since $P_t$ is bounded, and we can write \\$P_t = \sum_{i=0}^{\infty} ((A-BK_{t+1})^T)^i M (A-BK_{t+1})^i$, $(A-BK_{t+1})$ must be stable so that the sum on the right hand side does not diverge. 
Therefore, $K_{t+1}$ is stabilizing. 

\textbf{D.} \emph{\underline{$K_{t+1}$ is Disturbance Attenuating}}:

Since $(A-BK_{t+1})$ is stable, there exists $P \geq 0$ that solves $(A-BK_{t+1})^T P (A-BK_{t+1}) - P = -V$, where $V > 0$. 
Choose $V$ to be:
\begin{align*}
&V := K_{t+1}^T \bar{R}_t K_{t+1} + \bar{Q}_t + \rho I \\& \quad + (A-BK_{t+1})^T (P D(\gamma^2 I - D^T P D)^{-1} D^TP)\nonumber\\& \qquad \qquad \times(A-BK_{t+1}),
\end{align*}
where $\rho > 0$ is chosen so that $V$ is positive definite, and
\begin{align}
& \rho I + (A-BK_{t+1})^T (P D(\gamma^2 I - D^T P D)^{-1} D^TP) \nonumber\\& \qquad \qquad \times(A-BK_{t+1})\nonumber\\
&\leq (A-BK_{t+1})^T (P_t D(\gamma^2 I - D^T P_t D)^{-1} D^TP_t) \nonumber\\& \qquad \qquad \times(A-BK_{t+1})\nonumber\\
& \qquad+ (K_{t+1} - K_t)^T(\bar{R}_t + B^T \tilde{P}_t B) (K_{t+1} - K_t). \label{EqnRho}
\end{align}
Rearranging these equations gives us $(A-BK_{t+1})^T \tilde{P} (A-BK_{t+1}) - P +K_{t+1}^T \bar{R}_t K_{t+1} + \bar{Q}_t = -\rho I < 0$, where $\tilde{P}$ is according to Equation (\ref{RiccatiEqn}). 
This satisfies the second part of the second condition in Lemma \ref{BddRealLemma}. 

When $K_t \in \mathcal{K}$, $\gamma^2I-D^T P_t D >0$ from Lemma \ref{BddRealLemma}. 
We can write $
\gamma^2I-D^T P D = \gamma^2I-D^T P_t D + D^T (P_t-P)D$. 
Now, $P_t-P = (A-BK_{t+1})^T (P_t-P) (A-BK_{t+1}) +N$, where $N$ is got by subtracting the term on the left of the inequality in (\ref{EqnRho}) from the term on the right. 
This is a Lyapunov equation in $P_t-P$. 
Since $K_{t+1}$ is stabilizing and $N>0$, there is a positive semi-definite solution, which gives us $P_t-P \geq 0$. 
Therefore, $0 < \gamma^2I-D^T P_t D \leq \gamma^2I-D^T P D$, which satisfies the first part of the second condition in Lemma \ref{BddRealLemma}. 
Then, from Lemma \ref{BddRealLemma}, $K_{t+1}$ is also such that $||T^{K_{t+1}}_{zw}||_\infty < \gamma$, and therefore, $K_{t+1} \in \mathcal{K}$, which completes the proof. 
\end{proof}

\subsection{S2DA and S3DA Policies}

In this part, we present results quantifying the stability and disturbance attenuation of a sequence of valid policies. 
We begin by showing that there exist values of parameters $\kappa, \epsilon$ such that any stable and disturbance attenuating policy is S2DA. 
The proofs are omitted due to space constraints.
%
\begin{proposition}
Assume that $K \in \mathcal{K}$. Then, there exist values $\kappa, \epsilon$ such that $K$ is $(\kappa, \epsilon, \gamma)-$S2DA. 
\end{proposition}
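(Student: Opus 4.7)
The plan is to note that only the strong-stability part requires work, since the disturbance-attenuation requirement $\|T^K_{zw}\|_\infty<\gamma$ is already built into the hypothesis $K\in\mathcal{K}$. What remains is to exhibit a similarity $A-BK=HLH^{-1}$ with $\|L\|\le 1-\epsilon$ and $\|H\|\,\|H^{-1}\|\le\kappa$, together with a choice of $\kappa$ dominating $\|K\|$. The spectral-radius hypothesis $\rho:=|\lambda_{\max}(A-BK)|<1$ is exactly the ingredient needed to produce such a decomposition.

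First I would invoke the standard fact from linear algebra that for any square matrix $M$ with spectral radius $\rho(M)<1$ and any $\delta>0$, there exists an invertible $H_\delta$ such that $\|H_\delta^{-1}MH_\delta\|\le\rho(M)+\delta$. The classical way to see this is to put $M$ into Jordan canonical form $M=S J S^{-1}$ and then conjugate each Jordan block $J_i=\lambda_i I+N_i$ by a diagonal scaling $\mathrm{diag}(1,\delta,\delta^2,\ldots)$; this leaves the eigenvalues untouched but shrinks the off-diagonal nilpotent part to size $O(\delta)$. The composition $H_\delta:=S\cdot\mathrm{diag}(\cdot)$ is the desired transformation.

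Applied to $M=A-BK$, pick any $\delta\in(0,1-\rho)$, set $L:=H_\delta^{-1}(A-BK)H_\delta$ and $\epsilon:=1-\rho-\delta>0$, so $\|L\|\le\rho+\delta=1-\epsilon$. Then define
\begin{equation*}
\kappa:=\max\bigl\{\|K\|,\ \|H_\delta\|\,\|H_\delta^{-1}\|\bigr\}.
\end{equation*}
By construction $\|K\|\le\kappa$, $\|H_\delta\|\,\|H_\delta^{-1}\|\le\kappa$, and $\|L\|\le 1-\epsilon$, so $K$ is $(\kappa,\epsilon)$-strongly stable. Combined with the attenuation condition inherited from $K\in\mathcal{K}$, this shows $K$ is $(\kappa,\epsilon,\gamma)$-S2DA.

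The only non-routine ingredient is the Jordan-scaling lemma, but it is entirely classical; there is no real obstacle here beyond careful bookkeeping. A minor point worth flagging is that the parameters $\kappa,\epsilon$ produced by this argument depend on $K$ through the eigenvalues and the conditioning of the Jordan basis of $A-BK$; obtaining the \emph{uniform} $(\kappa,\epsilon)$ needed for the sequential S3DA property along an iterate $\{K_t\}$ is a separate matter and is precisely where a uniform bound on the Riccati solutions $\{P_t\}$ (as anticipated in the authors' outline) enters.
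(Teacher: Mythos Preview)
Your argument is correct. The paper actually omits the proof of this proposition for space, so there is no line-by-line comparison available; however, the surrounding material (the reliance on \cite{cohen2018online} and the explicit constants $\bar\kappa=\sqrt{\nu/\mu}$, $\epsilon=1/(2\bar\kappa^2)$ appearing in Proposition~\ref{S2DAPropn}) indicates the intended route is Lyapunov/Riccati-based rather than your Jordan-form construction: since $K\in\mathcal{K}$, Lemma~\ref{BddRealLemma} furnishes $P>0$ satisfying the modified Riccati identity, and taking $H=P^{1/2}$, $L=P^{1/2}(A-BK)P^{-1/2}$ gives $L^TL\le I-P^{-1/2}(Q+K^TRK)P^{-1/2}$ directly, so $\|L\|^2\le 1-\lambda_{\min}(Q+K^TRK)/\lambda_{\max}(P)$ and $\|H\|\,\|H^{-1}\|=\sqrt{\lambda_{\max}(P)/\lambda_{\min}(P)}$. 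Your approach is more elementary in that it uses only the spectral-radius hypothesis and no control-theoretic structure, but the $(\kappa,\epsilon)$ it produces depend on the conditioning of the Jordan basis of $A-BK$, which is uncontrolled and hard to track along an iterate $\{K_t\}$; the Lyapunov route instead ties $(\kappa,\epsilon)$ to bounds on $P$ and the cost matrices---exactly the quantities the paper bounds---which is why it feeds cleanly into the uniform S3DA statement that follows. One minor technical point on your construction: if $A-BK$ has non-real eigenvalues the Jordan basis is complex, so you should either pass to the real Jordan form (with $2\times2$ blocks for conjugate pairs, where the same diagonal-scaling trick still works) or note that the definition of strong stability does not require $H,L$ to be real.
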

%
 
Suppose that a sequence of positive definite matrices $P_t$ is generated according to Equation (\ref{RiccatiRecn}), where $K_{t+1}$ is given by Equation (\ref{ContrRecn}), and $K_1$ is an initial stable and disturbance attenuating policy. 
Then, we have the following result, assuming that the updates $P_t$ are uniformly bounded. 

\begin{proposition}\label{S2DAPropn}
Let $Q_t ,R_t \geq \mu I$, $P_t \leq \nu I$, and $K_t \in \mathcal{K}$. 
Then, $\{K_t\}_{t \geq 1}$ is $(\bar{\kappa}, \frac{1}{2\bar{\kappa}^2}, \gamma)-$S2DA, where $\bar{\kappa}:=\sqrt{\nu/ \mu}$.
%

Additionally, if $||P_t - P_{t+1}|| \leq p \leq \mu ^2/ \nu$, then, $\{K_t\}_{t \geq 1}$ is $(\bar{\kappa}, \frac{1}{2\bar{\kappa}^2}, \gamma)-$S3DA, where $\bar{\kappa}:=\sqrt{\nu/ \mu}$.
\end{proposition}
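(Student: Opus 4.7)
The proof proceeds by first extracting uniform upper and lower bounds on $P_t$ from the Riccati equation, then constructing an explicit similarity transformation from those bounds.

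\textbf{Step 1 (bounds from the Riccati equation).} I would first note that since the updates in (\ref{RtUpdate})--(\ref{QtUpdate}) are convex combinations, the hypothesis $Q_t, R_t \geq \mu I$ propagates by induction to $\bar{Q}_t, \bar{R}_t \geq \mu I$ for every $t$. Dropping the PSD term $(A-BK_t)^T \tilde{P}_t (A-BK_t)$ in (\ref{RiccatiRecn}) gives $P_t \geq K_t^T \bar{R}_t K_t \geq \mu\, K_t^T K_t$, and combined with $P_t \leq \nu I$ this yields $||K_t||^2 \leq \nu/\mu = \bar{\kappa}^2$. Dropping instead $K_t^T \bar{R}_t K_t$ gives $P_t \geq \bar{Q}_t \geq \mu I$, so in total $\mu I \leq P_t \leq \nu I$.

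\textbf{Step 2 (similarity transformation and S2DA).} The key move is to set $H_t := P_t^{-1/2}$, so that $L_t := H_t^{-1}(A-BK_t) H_t = P_t^{1/2}(A-BK_t) P_t^{-1/2}$ realizes $A - BK_t = H_t L_t H_t^{-1}$. From Step 1, $||H_t|| \leq 1/\sqrt{\mu}$ and $||H_t^{-1}|| \leq \sqrt{\nu}$, so $||H_t||\,||H_t^{-1}|| \leq \bar{\kappa}$ (take $\beta = 1/\sqrt{\mu}$, $\alpha = 1/\sqrt{\nu}$). For $L_t$,
\[ L_t^T L_t = P_t^{-1/2} (A-BK_t)^T P_t (A-BK_t) P_t^{-1/2}. \]
Using $\tilde{P}_t \geq P_t$ together with (\ref{RiccatiRecn}) gives $(A-BK_t)^T P_t (A-BK_t) \leq (A-BK_t)^T \tilde{P}_t (A-BK_t) = P_t - \bar{Q}_t - K_t^T \bar{R}_t K_t \leq P_t - \mu I$, so $L_t^T L_t \leq I - \mu P_t^{-1} \leq (1-\mu/\nu) I$. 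Hence $||L_t|| \leq \sqrt{1-\mu/\nu} \leq 1 - \mu/(2\nu) = 1 - 1/(2\bar{\kappa}^2)$, proving that each $K_t$ is $(\bar{\kappa},\, 1/(2\bar{\kappa}^2),\, \gamma)$-S2DA.

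\textbf{Step 3 (the sequential condition).} For S3DA the only new requirement is $||H_{t+1}^{-1} H_t|| \leq 1 + \epsilon$. With $H_{t+1}^{-1} H_t = P_{t+1}^{1/2} P_t^{-1/2}$, writing $P_{t+1} = P_t + (P_{t+1}-P_t)$ yields
\[ ||H_{t+1}^{-1} H_t||^2 = \lambda_{\max}(P_t^{-1/2} P_{t+1} P_t^{-1/2}) \leq 1 + ||P_t^{-1}||\cdot ||P_{t+1}-P_t|| \leq 1 + \tfrac{p}{\mu} \leq 1 + \tfrac{\mu}{\nu}, \]
where the last step uses $p \leq \mu^2/\nu$. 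Applying $\sqrt{1+x} \leq 1 + x/2$ gives $||H_{t+1}^{-1} H_t|| \leq 1 + \mu/(2\nu) = 1 + 1/(2\bar{\kappa}^2)$, as required.

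\textbf{Main obstacle.} The arithmetic is routine; the one non-obvious choice is $H_t = P_t^{-1/2}$ rather than $H_t = P_t^{1/2}$. This choice is what conjugates $L_t^T L_t$ by $P_t^{-1/2}$ on \emph{both} sides, aligning it with the upper bound on $(A-BK_t)^T P_t (A-BK_t)$ that the Riccati equation actually provides; the opposite choice would require a bound on $(A-BK_t) P_t (A-BK_t)^T$, which is not available, and would fail to produce a contraction factor.
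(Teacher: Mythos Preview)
Your proof is correct. The paper actually omits the proof of this proposition (``The proofs are omitted due to space constraints''), so there is nothing explicit to compare against; however, your argument is precisely the standard construction from \cite{cohen2018online} (which the paper cites as the source of strong stability), specialized to the Riccati recursion (\ref{RiccatiRecn}) via the single extra observation $\tilde{P}_t \geq P_t$, and this is almost certainly what the authors intended.
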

%

In the sequel, we will use $\mathcal{K}_{\kappa, \epsilon}$ to denote the set of $(\kappa, \epsilon, \gamma)-$S2DA or $(\kappa, \epsilon, \gamma)-$S3DA policies. 
\subsection{Bound on Riccati Recursion Updates}

Our next result yields a bound on the difference between successive updates of the Riccati recursion (\ref{RiccatiRecn}). 
We achieve this by reducing our framework to the form of the recursive updates for the traditional LQR that was shown in \cite{akbari2019iterative}, and assuming that parameter values are chosen so that an inequality in the proof will not depend on a constant term. 
\begin{theorem}\label{ThmRiccDiff}
Let $Q_t ,R_t \geq \mu I$, $Tr(Q_t), Tr(R_t) < \sigma, P_t \leq \nu I$, and $\{K_t\}_{t \geq 1}$ be $(\kappa, \epsilon, \gamma)-$S2DA. 
Then, there exist constants $p^*$ and $t^*$ such that $||P_{t+1} - P_t|| \leq p^*/t$ for all $t > t^*$.   
\end{theorem}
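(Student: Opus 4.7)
The plan is to derive a recursion for $X_t := P_{t+1} - P_t$ and force it to decay like $1/t$ by exploiting the decreasing drift in the averaged cost matrices $\bar{Q}_t, \bar{R}_t$. First I would bound this drift: from the update rule (\ref{QtUpdate}) I can write $\bar{Q}_{t+1} - \bar{Q}_t = \frac{1}{t+1}(Q_{t+1} - \bar{Q}_t)$, and since $Q_t \geq 0$ with $Tr(Q_t) < \sigma$ we have $\|Q_t\|, \|\bar{Q}_t\| \leq \sigma$, hence $\|\bar{Q}_{t+1}-\bar{Q}_t\| \leq 2\sigma/(t+1)$, and similarly for $\bar{R}$.

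Next I would reuse the completion-of-squares identity already established inside the proof of Theorem~\ref{TheoremSuccStabDist} to express $P_t$ in Lyapunov form with the \emph{next} controller $K_{t+1}$:
\[
P_t = (A-BK_{t+1})^T \tilde{P}_t (A-BK_{t+1}) + \bar{Q}_t + K_{t+1}^T \bar{R}_t K_{t+1} + \Delta_t,
\]
where $\Delta_t := (K_{t+1}-K_t)^T(\bar{R}_t + B^T \tilde{P}_t B)(K_{t+1}-K_t) \geq 0$. Subtracting this from the Riccati equation (\ref{RiccatiRecn}) at time $t+1$, which uses the \emph{same} controller $K_{t+1}$, I obtain
\[
X_t + \Delta_t = (A-BK_{t+1})^T (\tilde{P}_{t+1}-\tilde{P}_t)(A-BK_{t+1}) + E_t,
\]
where $E_t := (\bar{Q}_{t+1}-\bar{Q}_t) + K_{t+1}^T(\bar{R}_{t+1}-\bar{R}_t)K_{t+1}$ satisfies $\|E_t\| = O(1/t)$ thanks to $\|K_{t+1}\| \leq \kappa$ from the S2DA hypothesis.

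The term $\tilde{P}_{t+1}-\tilde{P}_t$ is a smooth function of $X_t$. I would linearize around $P_t$, writing $\tilde{P}_{t+1}-\tilde{P}_t = \mathcal{L}_t[X_t] + \mathcal{N}_t(X_t)$, where $\mathcal{L}_t$ is a linear operator whose norm is controlled by $\nu$, $\gamma$ and $\|D\|$ (using $P_t \leq \nu I$ and the strict inequality $\gamma^2 I - D^T P_t D > 0$ from Lemma~\ref{BddRealLemma}), and $\mathcal{N}_t$ collects quadratic and higher-order terms in $X_t$. Substituting gives a fixed-point equation $X_t = \mathcal{T}_t[X_t] + E_t - \Delta_t + (A-BK_{t+1})^T \mathcal{N}_t(X_t)(A-BK_{t+1})$, where $\mathcal{T}_t[X] := (A-BK_{t+1})^T \mathcal{L}_t[X](A-BK_{t+1})$. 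Using the S2DA factorization $A-BK_{t+1} = H L_{t+1} H^{-1}$ with $\|L_{t+1}\| \leq 1-\epsilon$ and $\|H\|\|H^{-1}\| \leq \kappa$, the iterates of $\mathcal{T}_t$ decay geometrically with rate $\kappa^2(1-\epsilon)^2 \|\mathcal{L}_t\|$; this is the step where the assumption mentioned in the theorem---that parameters are chosen so the resulting inequality has no extra constant term---is used to ensure $I - \mathcal{T}_t$ is invertible with bounded inverse.

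Finally I would close the induction on $t$. The slack $\Delta_t$ depends on $\|K_{t+1}-K_t\|^2$, and the closed-form update (\ref{ContrRecn}) together with $P_t \leq \nu I$ shows that $\|K_{t+1}-K_t\| = O(\|X_{t-1}\| + 1/t)$; under the inductive hypothesis $\|X_{t-1}\| \leq p^*/(t-1)$ this yields $\|\Delta_t\| = O(1/t^2)$, and the quadratic remainder $\mathcal{N}_t(X_t)$ similarly contributes $O(\|X_t\|^2)$, which is absorbed for $t$ sufficiently large. Combining everything, $\|X_t\| \leq c \|E_t\| + o(1/t) \leq p^*/t$ once $t$ exceeds a threshold $t^*$ chosen to dominate the quadratic remainders. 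The main obstacle is precisely this last step: one must verify that the linearization error in $\tilde{P}$ and the backward-in-time dependence created by $\Delta_t$ do not destroy the $1/t$ rate, and this is where the uniform bound $P_t \leq \nu I$ and the parameter assumption are essential.
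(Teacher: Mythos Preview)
Your proposal is correct and follows the same overall strategy as the paper: derive a Lyapunov-type recursion for $X_t = P_{t+1}-P_t$ by combining the completion-of-squares identity from Theorem~\ref{TheoremSuccStabDist} with the Riccati equation at time $t+1$; bound the cost-drift term $E_t$ by $2\sigma(1+\kappa^2)/(t+1)$; express the slack $\Delta_t$ through $K_{t+1}-K_t$, which in turn is controlled by $\|P_t-P_{t-1}\|$ and $\|\bar R_t-\bar R_{t-1}\|$; use the S2DA factorization to sum the resulting Lyapunov series with constant $\kappa^2/\epsilon$; and close by induction on $t$ once $t>t^*$.

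The one genuine difference is how you treat $\tilde P_{t+1}-\tilde P_t$. You linearize the map $P\mapsto\tilde P$ and absorb the linear part $\mathcal L_t[X_t]$ into the operator $\mathcal T_t$, invoking the parameter assumption to guarantee that $I-\mathcal T_t$ is boundedly invertible. The paper does something cruder: it keeps only the bare $P_{t+1}-P_t$ inside the Lyapunov operator and pushes the entire $\gamma$-correction difference into a forcing term $M_{t_1}$, which it then bounds by a \emph{constant} $\kappa^2 m_D$ independent of $t$ (and likewise an $m_D$ contribution survives inside the bound on $M_{t_3}$). To recover the $1/t$ rate the paper introduces an explicit Assumption~\ref{AssumpBound} stating that $\mu,\nu,\gamma$ are chosen so this constant term can be dropped, after which the recursion collapses to the pure-LQR inequality of \cite{akbari2019iterative} and yields the explicit thresholds $t^*=\frac{8\sigma\kappa^4\|B\|}{\epsilon\mu}\bigl(1+\frac{\kappa^2\|B\|(1+\kappa^2)}{\epsilon}\bigr)$ and $p^*\leq \frac{2\sigma}{\|B\|}+\frac{4\kappa^2\sigma(1+\kappa^2)}{\epsilon}$. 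Your linearization is the more principled bookkeeping---it explains \emph{why} the constant should disappear, namely because the correction $P\mapsto PD(\gamma^2 I-D^TPD)^{-1}D^TP$ is Lipschitz in $P$ on $\{P\leq \nu I\}$---but the parameter assumption you invoke for contractivity of $\mathcal T_t$ is exactly the same condition the paper imposes by fiat.
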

\begin{proof}
From Equations (\ref{RiccatiRecn}), (\ref{PtTheorem}), and (\ref{ContrRecn}),
\begin{align}
&P_{t+1} - P_t=(A-BK_{t+1})^T  (\tilde{P}_{t+1}  - \tilde{P}_{t})(A-BK_{t+1})\nonumber \\&\quad+ (\bar{Q}_{t+1} - \bar{Q}_t)+ K^T_{t+1} (\bar{R}_{t+1}-\bar{R}_t) K_{t+1} \nonumber\\&\quad- (K_{t+1} - K_t)^T(\bar{R}_t + B^T \tilde{P}_t B) (K_{t+1} - K_t) \\
&K_{t+1} - K_t=
(B^T \tilde{P}_t B + \bar{R}_t)^{-1} \label{EqnContDiff} \\&\quad \times (B^T (\tilde{P}_t - \tilde{P}_{t-1})(A-BK_t) + (\bar{R}_{t-1} - \bar{R}_t)K_t)\nonumber, 
\end{align}
where the last term in the last equation uses the fact that $\bar{R}_{t-1} K_t + B^T \tilde{P}_{t-1} B K_t - B^T \tilde{P}_{t-1} A =0$. 
Therefore, 
\begin{align}
P_{t+1} - P_t&=(A-BK_{t+1})^T (P_{t+1} - P_t)(A-BK_{t+1}) \nonumber \\&\qquad \qquad \qquad \qquad + M_t,  \label{LyapDiffPt}
\end{align}
where $M_t := M_{t_1} + M_{t_2} + M_{t_3}$, and 
\begin{align}
M_{t_1}&:= (A-BK_{t+1})^T \nonumber \\&\times (P_{t+1} D (\gamma^2 I - D^T P_{t+1}D)^{-1} D^T P_{t+1} \nonumber\\ &\qquad- P_{t} D (\gamma^2 I - D^T P_{t}D)^{-1} D^T P_{t})\nonumber \\&\qquad \quad \times (A-BK_{t+1}) \label{Eqn Mt1}\\
M_{t_2}&:=  (\bar{Q}_{t+1} - \bar{Q}_t)+ K^T_{t+1} (\bar{R}_{t+1}-\bar{R}_t) K_{t+1}  \label{EqnMt2} \\
-M_{t_3}&:=(B^T (\tilde{P}_t - \tilde{P}_{t-1})(A-BK_t) + (\bar{R}_{t-1} - \bar{R}_t)K_t)^T\nonumber \\& \qquad \times (B^T \tilde{P}_t B + \bar{R}_t)^{-1} \\& \quad \times (B^T (\tilde{P}_t - \tilde{P}_{t-1})(A-BK_t) + (\bar{R}_{t-1} - \bar{R}_t)K_t) \nonumber
\end{align}
Equation (\ref{LyapDiffPt}) is a Lyapunov equation. Therefore, 
\begin{align*}
P_{t+1} - P_t &= \sum_{i=0}^\infty ((A-BK_{t+1})^T)^i M_t (A-BK_{t+1})^i \\& \leq ||M_t|| \sum_{i=0}^\infty  ((A-BK_{t+1})^T)^i (A-BK_{t+1})^i
\end{align*}
Now, $||M_t|| \leq \sum_{i=1}^3||M_{t_1}||$. 
Since $K_t$ is $(\kappa, \epsilon, \gamma)-$S2DA, $||K_t|| \leq \kappa$, $(A-BK_{t+1}) = H_{t+1} L_{t+1} H_{t+1}^{-1}$, and we have:
\begin{align}
&||\sum_{i=0}^\infty  ((A-BK_{t+1})^T)^i (A-BK_{t+1})^i|| \nonumber\\
&\leq \sum_{i=0}^\infty \kappa^2 (1-\epsilon)^{2i} = \frac{\kappa^2}{\epsilon (2-\epsilon)} \leq \frac{\kappa^2}{\epsilon} \label{ABKBound}
\end{align}
From $Tr(Q_t), Tr(R_t) \leq \sigma$, we can write:
\begin{align}
&||\bar{Q}_{t+1} - \bar{Q}_t|| = \frac{1}{t+1}||Q_{t+1} -\bar{Q}_t|| \leq \frac{2\sigma}{t+1} \nonumber\\
&||\bar{R}_{t+1}-\bar{R}_t||=\frac{1}{t+1}||R_{t+1} -\bar{R}_t|| \leq \frac{2\sigma}{t+1} \nonumber\\
&\qquad\Rightarrow ||M_{t_2}|| \leq \frac{2\sigma (1+\kappa^2)}{t+1} \label{Mt2Bound}
\end{align}
Now, consider $M_{t_1}$. 
We can write $||A-BK_{t+1}|| \leq \kappa (1-\epsilon) \leq \kappa$, since $\epsilon \in (0,1]$\footnote{Note that $|\lambda_{max}(A-BK)| < ||A-BK|| $, where the (two-)norm of a matrix is given by its maximum singular value.}. 
Since $0 < P_t \leq \nu I$, we can write $(\gamma^2 I - D^T P_t D)^{-1} \leq (\gamma^2 I - \nu D^T D)^{-1}$. 
Therefore, 
\begin{align*}
&||(\gamma^2 I - D^T P_t D)^{-1}|| \leq ||(\gamma^2 I - \nu D^T D)^{-1}|| \\
&=||\gamma^{-2} (I - \frac{\nu}{\gamma^2}D^TD)^{-1}|| \leq \frac{1}{\gamma^2} + \frac{\nu}{\gamma^4}||D^T D||
\end{align*}
A lower bound on the norm of the middle term of $M_{t_1}$ is 
\begin{align}
||M_{t_1}|| &\leq  \kappa ^2 m_D, \text{ where} \label{Mt1Bound}\\
m_D:&=\frac{2\nu^2}{\gamma^2} (1+\frac{\nu}{\gamma^2}||D^T D||) ||D^T D|| \nonumber 
\end{align}
Since $R_t \geq \mu I$, $||(B^T \tilde{P}_t B + \bar{R}_t)^{-1}|| \leq \frac{1}{\mu}$. 
Then, we have:
\begin{align}
||M_{t_3}|| \leq \frac{1}{\mu}(\frac{2 \sigma\kappa}{t} + \kappa ||B|| (||P_t - P_{t-1}|| + m_D))^2 \label{Mt3Bound}
\end{align} 
Using the bounds in Equations (\ref{ABKBound})-(\ref{Mt3Bound}), we have: 
\begin{align}
&||P_{t+1} - P_t|| \leq \frac{\kappa^2}{\epsilon}(\kappa^2 m_D + \frac{2 \sigma (1+ \kappa^2)}{t+1}) \nonumber\\
\quad& + \frac{\kappa^2}{\epsilon \mu}(\frac{2 \sigma\kappa}{t} + \kappa ||B|| (||P_t - P_{t-1}|| + m_D))^2 \nonumber\\
=&\frac{2 \kappa^2 \sigma (1+ \kappa^2)}{\epsilon(t+1)} + \frac{\kappa^2}{\epsilon \mu}(\frac{2 \sigma\kappa}{t} + \kappa ||B|| (||P_t - P_{t-1}||))^2 \label{PtRecursiveBound}\\+& \frac{\kappa^4 m_D}{\epsilon}(\frac{2||B||}{\mu}(\frac{2 \sigma}{t}+||B||||P_t - P_{t-1}||)+||B||^2 m_D+1)\nonumber 
\end{align}
To complete the proof, we make the following assumption. 
\begin{assumption}\label{AssumpBound}
$\mu, \nu, \gamma$ are chosen so that the inequality (\ref{PtRecursiveBound}) will be true independent of the last term of (\ref{PtRecursiveBound}) for all $t>t^*$. 
\end{assumption}
%
Future work will examine the relaxation of this assumption in greater detail. 
This setting is now similar to that in Lemma A.6 in \cite{akbari2019iterative}. 
Therefore, if there is some $p^*$ and $t^*$ such that for all $t > t^*$, $(||P_t - P_{t-1}||) \leq p^*/t$, then $(||P_{t+1} - P_{t}||) \leq p^*/(t+1)$. 
Specifically, this will be true for\footnote{These thresholds can be obtained by expanding the quadratic term on the right-hand side of Equation (\ref{PtRecursiveBound}) and using Assumption \ref{AssumpBound} to get a quadratic inequality in $p^*$. 
That is, we get a quadratic $a(p^{*})^2+bp^*+c \leq 0$, where $a, b, c$ are terms involving $t$ and the constants in Equation (\ref{PtRecursiveBound}). 
The bound on $t$ is obtained by recognizing that $(t+1)/t^2 \approx 1/t$, and requiring that the roots of this quadratic inequality be real, that is, $\sqrt{b^2-4ac}>0$. 
The bound on $p^*$ is then got by requiring that $p^* \in [p_1, p_2]$, where $p_1$ and $p_2$ are roots of the quadratic equation $a(p^{*})^2+bp^*+c =0$. 
Specifically, we set $p^* \leq -b/2a$ so that the quadratic inequality will be satisfied.}:
\begin{align*}
t&>t^* = \frac{8 \sigma \kappa^4 ||B||}{\epsilon \mu}(1+ \frac{\kappa^2 ||B|| (1+\kappa^2)}{\epsilon})\\
p^* &\leq \frac{2 \sigma}{||B||} + \frac{4 \kappa^2 \sigma (1+ \kappa^2)}{\epsilon}
\end{align*}
%
The base case of the induction can be shown as in \cite{akbari2019iterative}. 
\end{proof}
%
%

\subsection{Online Algorithm}
\begin{algorithm}[!h]
	\caption{Safety-Critical Online Controller Synthesis}
	\label{algo:RicRecnAlgo}
	\begin{algorithmic}[1]
		\Procedure{Generate $\{K_t\}_{t >1}$}{}
		\State \textbf{Input:} System: $x_{t+1} = Ax_t + Bu_t + D w_t$, initial state, parameters $\mu, \nu, \kappa:=\sqrt{\nu/ \mu}, \epsilon = 1/(2\kappa^2), \gamma, \sigma, K_1 \in  \mathcal{K}_{\kappa, \epsilon}$, time horizon $T$
		\State \textbf{Output:} $\{K_t\}_{t >1}$, such that $K_t \in \mathcal{K}_{\kappa, \epsilon}$
		\For{$t=1,2,\dots, T$}
		\State obtain current state $x_t$
		\State generate $u_t = -K_t x_t$
		\State adversary plays $w_t$
		\State adversary generates $C_t, E_t$ (Assumption \ref{Assumption1})
		\State $Q_t:=C^T_t C_t$; $R_t := E^T_t E_t$
		\State update $\bar{R}_t$, $\bar{Q}_t$ acc. to Eqns. (\ref{RtUpdate})-(\ref{QtUpdate})
		\State update $P_t$ according to Eqn. (\ref{RiccatiRecn})
		\If {$t = \lceil \frac{8 \sigma \kappa^4 ||B||}{\epsilon \mu}(1+ \frac{\kappa^2 ||B|| (1+\kappa^2)}{\epsilon})\rceil$}
		\State $d:=0$, $P_0 := P_{t^*}$, $K_0 := K_{t^*}$
		\State $d \leftarrow d+1$
		\State successively solve Eqn. (\ref{RiccatiRecn}) as long as $||P_d-P_{d-1}|| > p^*/t^*$; update $K_d$ according to Eqn. (\ref{ContrRecn})
		\EndIf
		\State return $K_{t+1}$ according to Eqn. (\ref{ContrRecn})
		\EndFor
		\EndProcedure
	\end{algorithmic}
\end{algorithm}
%
From Assumption \ref{Assumption1} and Theorem \ref{TheoremSuccStabDist}, if we start at $t=1$ from a stabilizing policy that attenuates the disturbance, then our update procedure will continue to yield stabilizing, disturbance attenuating policies for all $t > 1$. 
At each step, we compute $u_t = -K_t x_t$, and the output and cost are revealed in terms of the matrices $C_t$, and $E_t$, where $C_t$ and $E_t$ satisfy Assumption \ref{Assumption1}. 
The update is carried out according to Equations (\ref{RtUpdate})-(\ref{QtUpdate}) by averaging over previous values of $Q_t := C^T_t C_t$ and $R_t:= E^T_t E_t$. 
From Theorem \ref{ThmRiccDiff}, $||P_{t+1}-P_t|| < p^*/t$ for $t > t^*$ (\emph{Lines 12-16}). 
Algorithm \ref{algo:RicRecnAlgo} formally presents this procedure. 
\subsection{Regret Bounds}

Iterative solutions to the Riccati equation in the LTI case exhibit quadratic convergence to an optimal solution $P^*$ \cite{hewer1971iterative}. 
In \cite{zhang2019policy}, this convergence rate was also shown to hold for the variant of the Riccati equation that we use in this paper. 
Specifically, for some $c >0$, $||P_t - P^*|| \leq c ||P_{t-1} - P^*||^2$, and $|| P_{t+1} - P_t|| \leq c ||P_{t} - P_{t-1}||^2$. 
Further, observe that $R(T)$ in Problem \ref{ProblemTypical} can be written as $R(T) = Tr(P_T DD^T) - Tr(P_T^* DD^T)$, where $P^*_T$ corresponds to the solution of the Riccati equation that yields the optimal controller from the set $\mathcal{K}$. 
We use these results to establish a bound on the growth of the regret for sufficiently large $T$. 

\begin{theorem}\label{ThmRegBound}
Let the conditions of Assumption \ref{Assumption1} hold, and let $Q_t ,R_t \geq \mu I$, $Tr(Q_t), Tr(R_t) < \sigma$, $P_t \leq \nu I$, $\kappa = \sqrt{\nu/ \mu},$ $ \epsilon = 1/2\kappa^2$. Let the controllers $\{K_t\}_{t \geq 1} $ be $(\kappa, \epsilon, \gamma)-$S2DA, and $DD^T > 0$. 
Then, for $T \geq t^*=\frac{8 \sigma \kappa^4 ||B||}{\epsilon \mu}(1+ \frac{\kappa^2 ||B|| (1+\kappa^2)}{\epsilon})$, $p^* \leq \frac{2 \sigma}{||B||} + \frac{4 \kappa^2 \sigma (1+ \kappa^2)}{\epsilon}$, and some constant $m>0$, $R(T) \leq Tr (DD^T) (\log (T)+\frac{2mp^*}{t^*+1} - \log (t^*)) $.
\end{theorem}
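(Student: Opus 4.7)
The plan is to first reduce the regret to a spectral-norm bound on $P_T - P_T^*$, and then telescope the Riccati iterates using Theorem \ref{ThmRiccDiff} on the post-burn-in regime and quadratic convergence on the burn-in regime.

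First I would write $R(T) = Tr((P_T - P_T^*) DD^T)$ directly from Problem \ref{ProblemTypical}. Because $DD^T \succ 0$ by assumption, the standard inequality $Tr(AB) \leq \lambda_{\max}(A) Tr(B)$ (valid whenever $B$ is positive semidefinite) gives $R(T) \leq \|P_T - P_T^*\| \cdot Tr(DD^T)$. The whole task then reduces to bounding $\|P_T - P_T^*\|$ by $\log T + \frac{2mp^*}{t^*+1} - \log t^*$.

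Next I would insert the burn-in iterate at $t = t^*$ and apply the triangle inequality:
\[
\|P_T - P_T^*\| \leq \sum_{t=t^*+1}^{T-1} \|P_{t+1} - P_t\| + \|P_{t^*} - P_T^*\|.
\]
Theorem \ref{ThmRiccDiff} bounds each summand by $p^*/t$, so the tail sum is at most $p^* \sum_{t=t^*+1}^{T-1} 1/t$, which by integral comparison is at most $p^*(\log T - \log t^*)$. This already delivers the $\log T - \log t^*$ contribution in the stated bound.

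For the residual $\|P_{t^*} - P_T^*\|$ I would invoke the quadratic convergence results of \cite{hewer1971iterative, zhang2019policy}. Lines 12--16 of Algorithm \ref{algo:RicRecnAlgo} explicitly run the Riccati recursion at $t = t^*$ until successive iterates satisfy $\|P_d - P_{d-1}\| \leq p^*/t^*$, so by $\|P_t - P^*\| \leq c\|P_{t-1}-P^*\|^2$ the iterate $P_{t^*}$ lies within a quadratically small neighborhood of $P^*_{t^*}$. Combining this with the single-step bound $p^*/(t^*+1)$ and absorbing the contraction factor into a constant $m > 0$ yields $\|P_{t^*} - P^*_{t^*}\| \leq \frac{2 m p^*}{t^*+1}$. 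Summing the two contributions produces the claimed logarithmic regret bound.

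The main obstacle will be reconciling the fact that $P_T^*$ is the Riccati fixed point for the final averaged cost pair $(\bar{Q}_T, \bar{R}_T)$ while the telescoped iterates $P_t$ are driven by the shifting pair $(\bar{Q}_t, \bar{R}_t)$. To close this gap I would show that $\bar{Q}_t \to \bar{Q}_T$ and $\bar{R}_t \to \bar{R}_T$ at rate $O(1/t)$ from the update rule (\ref{RtUpdate})--(\ref{QtUpdate}) and $Tr(Q_t), Tr(R_t) < \sigma$, and then use Lipschitz continuity of the Riccati solution map in its cost matrices on the compact set $\{P : 0 \leq P \leq \nu I\}$ — which is where the S2DA assumption and the uniform bound $P_t \leq \nu I$ are critical, since they keep the iteration inside a region where the Riccati map is well-behaved throughout the run. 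Once this Lipschitz estimate is in hand, the constant $m$ is produced explicitly, and adding the tail and burn-in bounds completes the proof.
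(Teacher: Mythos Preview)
Your proposal is correct and follows essentially the same approach as the paper: reduce $R(T)$ to a trace bound involving $P_T - P_T^*$, split at the burn-in time $t^*$, apply Theorem~\ref{ThmRiccDiff} together with the harmonic-sum estimate for the tail $\sum_{t\geq t^*}\|P_{t}-P_{t-1}\|$, and invoke quadratic convergence of the Riccati recursion for the residual at $t^*$. The paper's only cosmetic difference is that it writes out the telescoping sum explicitly and inserts an auxiliary infinite-horizon fixed point $P^*$ between $P_{t^*}$ and $P_T^*$, which is where the factor $2$ in front of $mp^*/(t^*+1)$ originates; your final paragraph on reconciling the time-varying $(\bar Q_t,\bar R_t)$ with the fixed-cost optimum is in fact more explicit than the paper's treatment of that point.
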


\begin{proof}
With $J_0(\cdot) = 0$, we can express $R(T)$ as:
\begin{align*}
R&(T)= \sum_{t= 1}^T (Tr(P_t DD^T) - Tr (P_{t-1}DD^T)) - Tr (P_T^* DD^T)\\
&=\sum_{t= 1}^{t^*} (Tr(P_t DD^T) - Tr (P_{t-1}DD^T))- Tr (P_T^* DD^T)\\&+\sum_{t= t^*}^{T} (Tr(P_t DD^T) - Tr (P_{t-1}DD^T))\\
&=Tr(P_{t^*}DD^T) - Tr(P^* DD^T) + Tr (P^* DD^T) \\-& Tr(P_T^* DD^T)+\sum_{t= t^*}^{T} (Tr(P_t DD^T) - Tr (P_{t-1}DD^T))\\
&\leq  2 Tr (DD^T) ||P_{t^*} - P^*|| + Tr (DD^T) \sum_{t= t^*}^{T} ||P_t - P_{t-1}||,
\end{align*}
where $P^*$ is the optimal solution to the time-invariant, infinite-horizon Riccati equation. 
In the above, the first term can be interpreted as a \emph{burn-in cost}, that is, the cost incurred before the procedure starts to yield meaningful regret bounds, while the second term gives the bound for large enough $T$. 

From Theorem \ref{ThmRiccDiff}, $\sum_{t= t^*}^{T} ||P_t - P_{t-1}|| \leq \sum_{t= t^*}^{T} p^* /t \leq \log(\frac{T}{t^*})$, while for the first term, we use the quadratic convergence to $P^*$ to obtain $||P_{t^*} - P^*|| \leq \frac{mp^*}{t^*+1}$. 
Here, $m$ is a constant associated with $\lim \limits_{t \rightarrow \infty} \sum \limits_{i=0}^{t} ||P_{t^*+i} - P_{t^*+i+1}||$. 
%
Therefore, $R(T) \leq Tr (DD^T) (\log (T)+\frac{2mp^*}{t^*+1} - \log (t^*))  .$
\end{proof}

The regret bound in our case differs from those shown in related work (e.g. \cite{akbari2019iterative, agarwal2019online, simchowitz2020improper}) due to the nature of the cost function that we seek to optimize in this paper.  
Since we are interested in the minimization of an (upper bound on the) $\mathcal{H}_2$ cost, given by $\lim \limits_{t \rightarrow \infty} \mathbb{E}[z^T_t z_t]$, when $w_t$ is white noise, our regret term of the form in Problem \ref{ProblemTypical} can be recast in the form on the first line of the above proof.

\section{Experimental Evaluation}\label{Simulation}

We validate our method on a well-studied problem from process control called the Tennessee Eastman control challenge \cite{downs1993plant}. 
The irreversible and exothermic process (Figure \ref{TESchematic}) produces two products ($G, H$) from four reactants ($A, C, D, E$); component $F$ represents other products formed from side reactions in the process.  
The open-loop process is unstable, which necessitates the use of feedback control. 
This model has been adapted to demonstrate the use of machine learning methods to study resilience to attacks \cite{keliris2016machine}, fault detection \cite{zou2018fault}, and impacts of advanced persistent threats \cite{huang2020dynamic}. 
A continuous-time LTI model of the plant presented in \cite{ricker1993model} consisted of eight states, four inputs, and ten outputs. 
We use values of the $A, B, C$ matrices from \cite{ricker1993model}, and discretize the model, assuming a zero order hold. 
We additionally assume $w_t \in \mathbb{R}^8$, $D = I_{8 \times 8}$, and $E$ chosen to satisfy Assumption \ref{Assumption1}. 
We use these values of $C$ and $E$ to determine the (optimal) counterfactual static controller $K \in \mathcal{K}$.
\begin{figure}[!h]
 \centering
  \includegraphics[width=3.85 in]{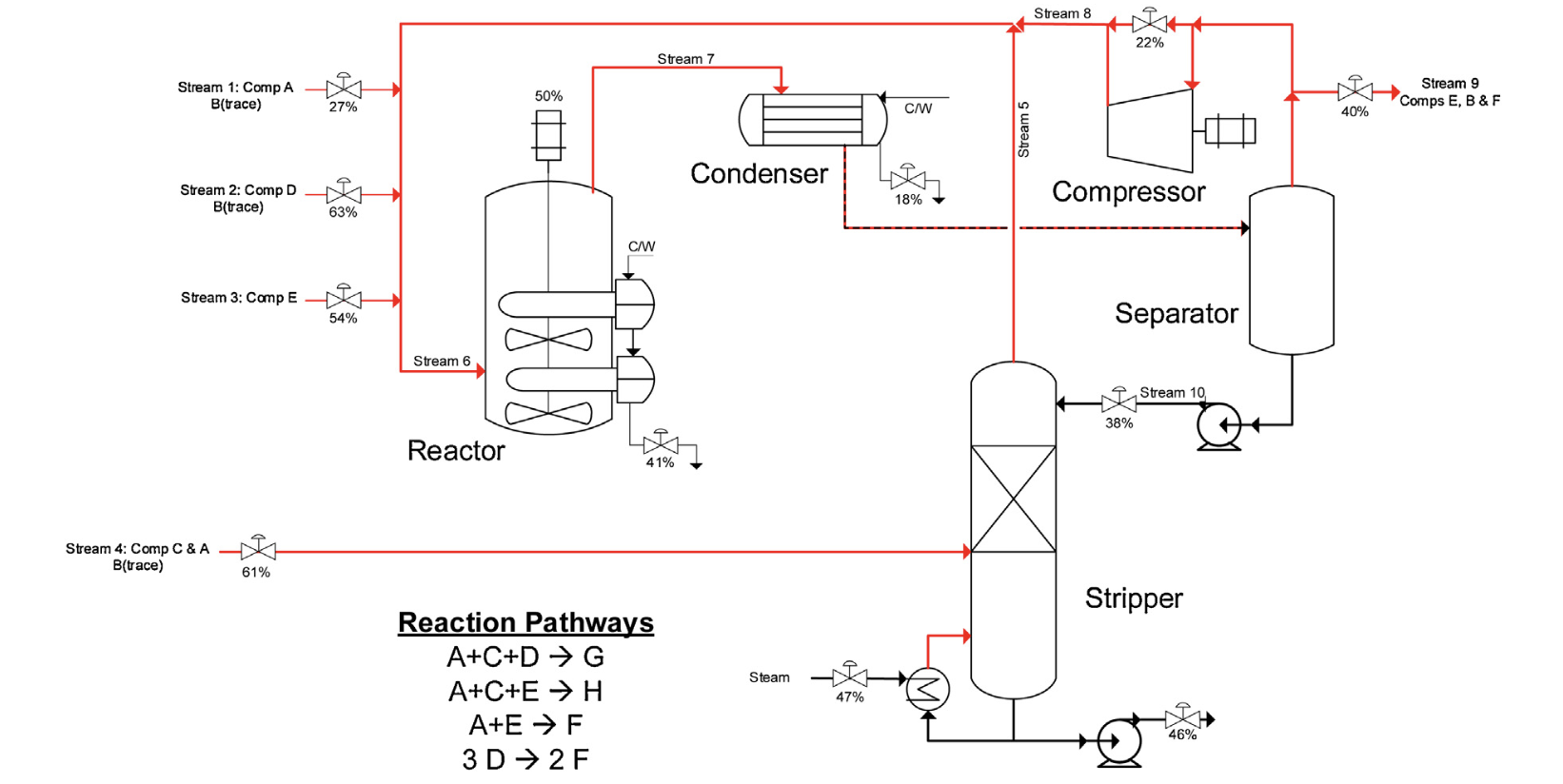} 
\caption{Flow diagram of the Tennessee Eastman Process \cite{udugama2020novel}}\label{TESchematic}
\end{figure}
\begin{figure}[t]
 \centering
 \begin{subfigure}{0.42 \textwidth}
 \centering
  \includegraphics[width = 0.98 \textwidth]{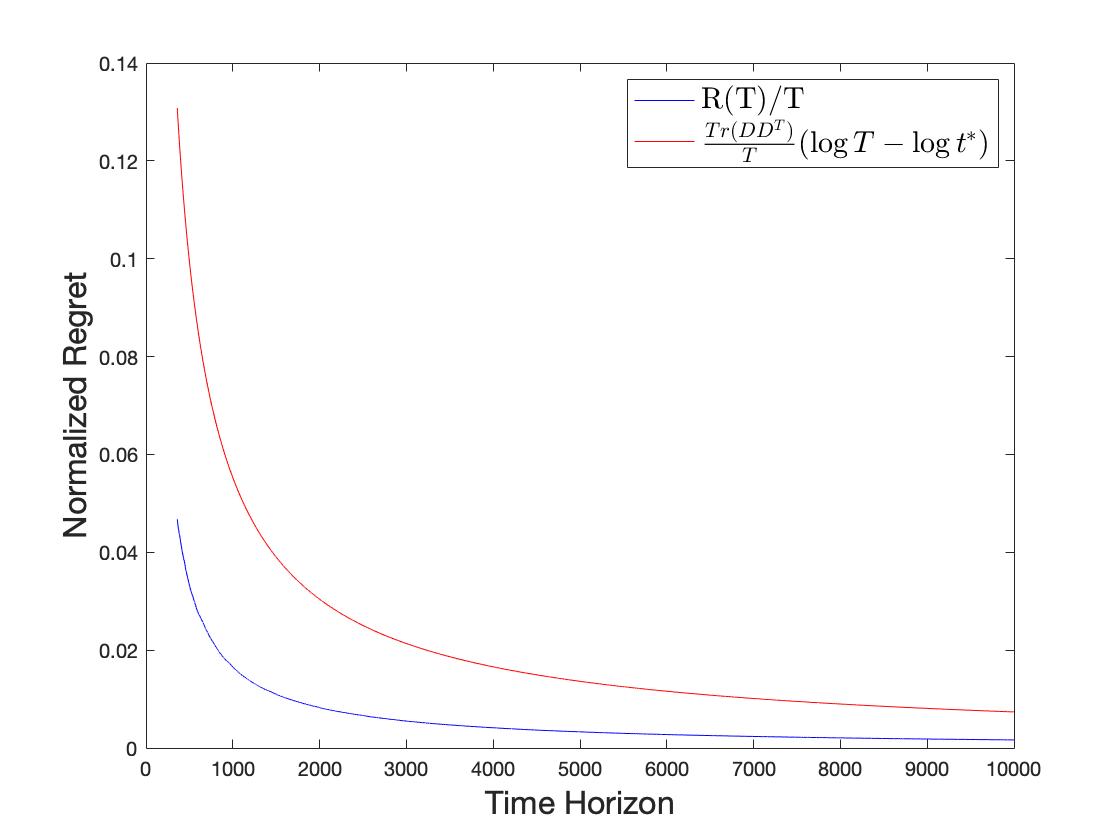} 
\caption{Arbitrary adversarial input.}\label{RegretArbit}
\end{subfigure}
\begin{subfigure}{0.42 \textwidth}
 \centering
  \includegraphics[width = 0.98 \textwidth]{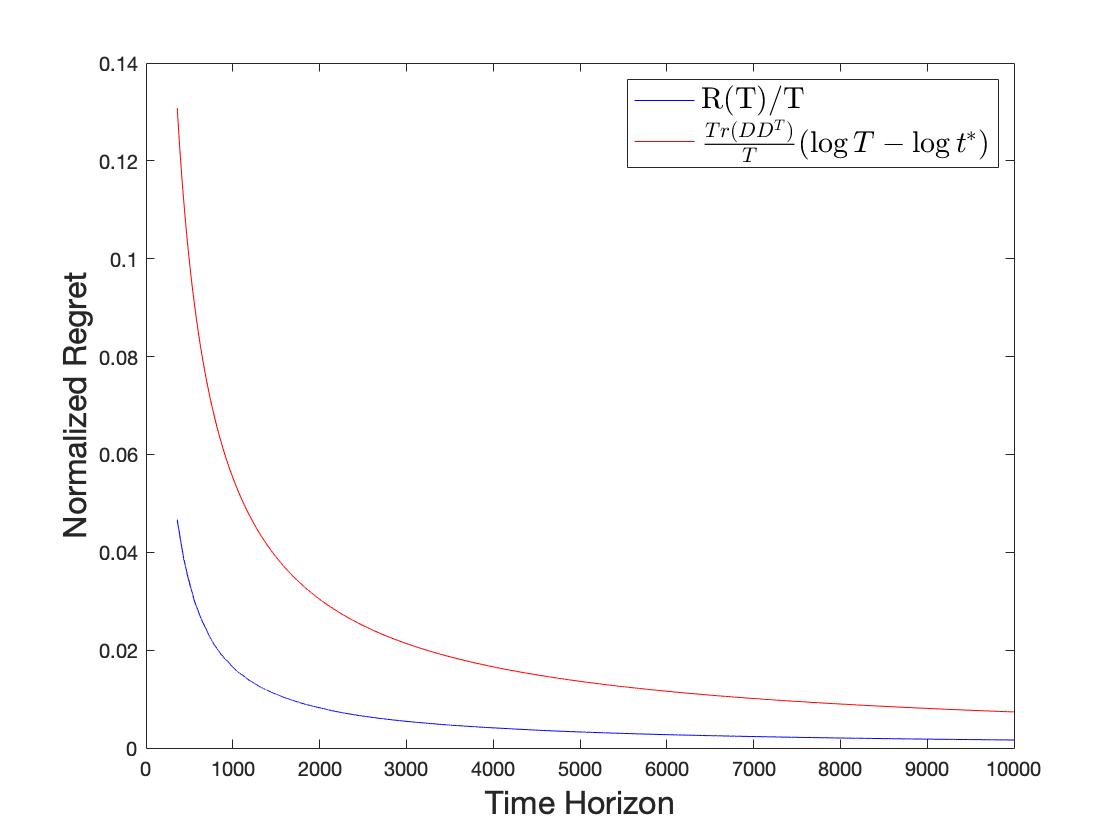} 
\caption{Denial-of-service attack.}\label{RegretDoS}
\end{subfigure}
\caption{Normalized regret for two types of adversarial input. In Fig. \ref{RegretArbit}, the adversary input, $w_t$ is arbitrary. In Fig. \ref{RegretDoS}, $w_t$ cancels the effect of the control $u_t$ on the state evolution during the attack, denoting denial-of-service. The blue curves show the normalized regret of a controller chosen according to Algorithm \ref{algo:RicRecnAlgo} with respect to the optimal, counter-factual, time-invariant static controller. The red curves denote the (normalized) right-hand side of the regret bound of Thm. \ref{ThmRegBound}.}\label{Graphs}
\end{figure}

We consider two attack scenarios. 
In the first, at each time step, the adversary injects an arbitrary input $w_t$. 
We make no assumptions on the nature of this input (except that it is bounded, for the purpose of simulation). 
In the second, we simulate a denial-of-service attack, by setting $w_t = -Bu_t$ for $t \in [t_{a}, t'_{a}]$, and arbitrary at other times. 
$ [t_{a}, t'_{a}]$ is the attack duration such that $1 < t_a \leq t'_a$. 
In this case, the impact of the controller on the evolution of the state is canceled during the attack, but the learner still incurs a cost associated to the control (as a result of the $E_t$ term). 
In each case, the matrices $C_t$ and $E_t$ are perturbed versions of $C$ and $E$. 

The normalized regret for the two attacks are shown in Figure \ref{Graphs}. 
In particular, we observe that the regret of a sequence of controllers computed according to Algorithm \ref{algo:RicRecnAlgo} with respect to the optimal, counter-factual, time-invariant static controller that is obtained by solving the Riccati equation for the time-invariant case satisfies the bounds determined in Theorem \ref{ThmRegBound}. 
For the denial-of-service attack, although the effect of the controller is canceled for the duration of the attack, as long as this attack starts at $t_a >1$, Algorithm \ref{algo:RicRecnAlgo} will continue to produce stabilizing, disturbance attenuating controllers if we start from an initial controller that is stabilizing and disturbance attenuating (Theorem \ref{TheoremSuccStabDist}).
%
%
\section{Conclusion}\label{Conclusion}

This paper presented an iterative solution to an online control problem in the presence of bounded adversarial disturbances. 
In this setting, costs incurred by the system at each time due to an adversarial disturbance input were revealed only after the input was given. 
We synthesized controllers to minimize (an upper bound of) a quadratic cost while simultaneously satisfying a safety constraint. 
This was achieved by solving a Riccati equation in an iterative manner. 
Solutions to the Riccati equation enforced the safety constraint. 
We showed that initializing the procedure with a stabilizing and disturbance attenuating controller ensured that controllers at successive time steps retained this property. 
We showed that the regret of this controller, compared to the optimal controller when all costs and disturbances were known in hindsight, varied logarithmically with the time horizon. 
We validated our approach on a model of the Tennessee Eastman chemical process that was subject to arbitrary adversarial inputs and a denial of service attack. 

Future work will 
study the partial information setting, where one will have to synthesize dynamic output feedback controllers, and the more generalized problem of minimizing the $\mathcal{H}_{\infty}$ norm of the output to disturbance map. 
We will also extend our analysis to the case of unknown system dynamics.
%

\bibliographystyle{IEEEtran}
\bibliography{RLDDPRef}
\end{document}